\newdimen\prevdp
\def\leftlabel#1{\noalign{\prevdp=\prevdepth
   \kern-\prevdp\nointerlineskip\vbox to0pt{\vss\hbox{\ensuremath{#1}}}\kern\prevdp}}
\newcommand{\NP}{\ensuremath{\mathsf{NP}}\xspace}
\newcommand{\NPC}{\ensuremath{\mathsf{NP}}\text{-complete}\xspace}
\newcommand{\NPH}{\ensuremath{\mathsf{NP}}\text{-hard}\xspace}
\newcommand{\PNPH}{para-\ensuremath{\mathsf{NP}\text{-hard}}\xspace}
\newcommand{\WOH}{\ensuremath{\mathsf{W[1]}}-hard\xspace}
\newcommand{\FPT}{\ensuremath{\mathsf{FPT}}\xspace}
\newcommand{\tsat}{\ensuremath{(3,\text{B}2)}-{\sc SAT}\xspace}
\newcommand{\tvdp}{{\sc Two Vertex Disjoint Paths}\xspace}
\let\oldlambda\lambda
\renewcommand{\lambda}{\ensuremath{\oldlambda}\xspace}
\let\oldalpha\alpha
\renewcommand{\alpha}{\ensuremath{\oldalpha}\xspace}
\let\oldDelta\Delta
\renewcommand{\Delta}{\ensuremath{\oldDelta}\xspace}
\newcommand{\YES}{{\sc yes}\xspace}
\newcommand{\NO}{{\sc no}\xspace}
\newcommand{\yes}{{\sc yes}\xspace}
\newcommand{\true}{\text{{\sc true}}\xspace}
\newcommand{\false}{\text{{\sc false}}\xspace}
\newcommand{\rd}{{\sc Resolve Delegation}\xspace}
\newcommand{\mmo}{{\sc Minimum Maximum Outdegree}\xspace}
\newcommand{\rfd}{{\sc Resolve Fractional Delegation}\xspace}
\newcommand{\CC}{\ensuremath{\mathcal C}\xspace}
\newcommand{\DD}{\ensuremath{\mathcal D}\xspace}
\newcommand{\EE}{\ensuremath{\mathcal E}\xspace}
\newcommand{\FF}{\ensuremath{\mathcal F}\xspace}
\newcommand{\GG}{\ensuremath{\mathcal G}\xspace}
\newcommand{\HH}{\ensuremath{\mathcal H}\xspace}
\newcommand{\PP}{\ensuremath{\mathcal P}\xspace}
\newcommand{\TT}{\ensuremath{\mathcal T}\xspace}
\newcommand{\VV}{\ensuremath{\mathcal V}\xspace}
\newcommand{\XX}{\ensuremath{\mathcal X}\xspace}
\newtheorem{observation}{\bf Observation}
\newcommand{\eps}{\ensuremath{\varepsilon}\xspace}
\renewcommand{\epsilon}{\eps}
\newcommand{\ignore}[1]{}
\newcommand{\pr}{\ensuremath{\prime}}
\newcommand{\prr}{\ensuremath{{\prime\prime}}}
\renewcommand{\leq}{\leqslant}
\renewcommand{\geq}{\geqslant}
\renewcommand{\le}{\leqslant}
\crefname{theorem}{Theorem}{\bf Theorems}
\crefname{observation}{Observation}{\bf Observations}
\crefname{lemma}{Lemma}{\bf Lemmata}
\crefname{corollary}{Corollary}{\bf Corollaries}
\crefname{proposition}{Proposition}{\bf Propositions}
\crefname{definition}{Definition}{\bf Definitions}
\crefname{claim}{Claim}{\bf Claims}
\crefname{reductionrule}{Reduction rule}{\bf Reduction rules}
\title{On Parameterized Complexity of Liquid Democracy}
\author{Palash Dey$^{1}$ \and Arnab Maiti$^{2}$ \and Amatya Sharma$^{3}$}
\institute{Indian Institute of Technology Kharagpur\\
 \email{palash.dey@cse.iitkgp.ac.in}$^1$, \email{arnabmaiti@iitkgp.ac.in}$^2$, \email{amatya65555@iitkgp.ac.in}$^3$}
\authorrunning{On Parameterized Complexity of Liquid Democracy}
\begin{document}

\maketitle

\begin{abstract}
	In {\em liquid democracy}, each voter either votes herself or delegates her vote to some other voter. This gives rise to what is called a delegation graph. To decide the voters who eventually votes along with the subset of voters whose votes they give, we need to {\em resolve} the cycles in the delegation graph. This gives rise to the \rd problem where we need to find an acyclic sub-graph of the delegation graph such that the number of voters whose votes they give is bounded above by some integer $\lambda$. Putting a cap on the number of voters whose votes a voter gives enable the system designer restrict the power of any individual voter. The \rd problem is already known to be \NPH. In this paper we study the parameterized complexity of this problem. We show that \rd is \PNPH with respect to parameters $\lambda$, number of sink nodes and the maximum degree of the delegation graph. We also show that \rd is \WOH even with respect to the treewidth of the delegation graph. We complement our negative results by exhibiting FPT algorithms with respect to some other parameters. We finally show that a related problem, which we call \rfd, is polynomial time solvable.
\keywords{Liquid Democracy  \and \rd \and Parameterized Complexity}
\end{abstract}

\section{Introduction}
In a {\em direct democracy}, agents vote for a candidate by themselves. In {\em liquid democracy}, the voters can delegate their votes to other agents who can vote on their behalf. Suppose voter $1$ delegates her vote to voter $2$ and voters $2$ and $3$ delegate their votes to voter $4$. Then voter $4$ has a voting power equivalent to $4$ individual votes. That is delegations are transitive. This particular feature can make liquid democracy a disruptive approach to democratic voting system. This happens because such a voting system can lead to what we call a super-voter who has a lot of voting power. So now the candidates instead of trying to appease the general public can do behind the closed door dealings with the super-voters and try to win the election in an unfair manner. In order to deal with this issue, a central mechanism ensures that no super-voter has a lot of voting power. Formally we do it as follows. We create a delegation graph where the set of vertices is the set of voters and we have a directed edge from vertex $i$ to vertex $j$ if voter $i$ delegates her vote to voter $j$. We observe that delegation graph may contain cycles. Every voter is also allowed to delegate her vote to more than one other voters and let the system decide her final delegation. We use a central mechanism to find a acyclic sub-graph of the delegation graph such that no super-voter (the vertices having out-degree $0$) has a lot of voting power. We call this problem \rd.

\subsection{Related Work}
An empirical investigation of the existence and influence of super-voters was done by \cite{kling2015voting}. They showed that the super-voters can be powerful although they seem to act in a responsible manner according to their results. There have been a few theoretical work in this area by \cite{green2015direct},\cite{christoff2017binary} and \cite{kahng2018liquid}. A detailed theoretical work especially on the approximation algorithms in this setting was done by \cite{golz2018fluid}. Some other important work in Liquid democracy includes \cite{brill2018pairwise} and \cite{caragiannis2019contribution}.

\subsection{Our Contribution}

We study parameterized complexity of the \rd problem with respect to various natural parameters. In particular, we consider the number of sink vertices ($t$), maximum allowed weight $\lambda$ of any sink in the final delegation graph, maximum degree ($\Delta$), tree-width,  number of edges deleted in optimal solution ($e_{rem}$), number of non-sink vertices ($|\VV\backslash \TT|$). The number of sink vertices corresponds to the number of influential voters which is often a small number in practice. This makes the number of sink vertices an important parameter to study. Similarly, the parameter \lambda corresponds to the ``power'' of a voter. Since the input to the problem is a graph, it is natural to study parameters, for example, tree-width (by ignoring the directions of the edges) and the number of edges that one needs to delete in an optimal solution. We summarize our results in \Cref{tbl:summary}. We finally show that \rd is polynomial time solvable if we allow fractional delegations~[\Cref{thm:fractional_delegation}].


\begin{table}[h]
	\centering
	\begin{tabular}{|c|c|}\hline
		\textbf{Parameter} & \textbf{Result}\\\hline\hline
		
		t & \PNPH~[\Cref{thm:t}]\\\hline
		
		(\lambda, \Delta) & \PNPH~[\Cref{thm:lambda_D}]\\\hline
		
		(\lambda, t) & quadratic vertex kernel~[\Cref{obs:lambda_t}]\\\hline
		
		tree-width & W[1]-Hard~[\Cref{thm:treewidth}] \\\hline
		
		$e_{rem}$ & FPT by bounded search tree technique~[\Cref{edges_deleted_FPT}] \\\hline\hline
		$|\VV\backslash \TT|$ & FPT by bounded search tree technique~[\Cref{non_sink_FPT}] \\\hline\hline
		
		
		\textbf{Problem under Assumption} & \textbf{Result} \\\hline\hline

		fractional delegation & Reduction to LP ~[\Cref{thm:fractional_delegation}]\\\hline     
		DAG,Bipartite Graph & W[1]-Hard w.r.t treewidth ~[\Cref{cor:bip}]\\\hline
		DAG,Bipartite Graph & para-NP-hard w.r.t \lambda, \Delta ~[\Cref{cor:lambda_D_bipartite}]\\\hline
	\end{tabular}
	\caption{Summary of results.}\label{tbl:summary}
\end{table}

\section{Preliminaries}\label{sec:prelim}
A directed graph $\GG$ is a tuple $(\VV,\EE)$ where $\EE\subseteq\{(x,y): x,y\in\VV,x\ne y\}$. For a graph \GG, we denote its set of vertices by $\VV[\GG]$, its set of edges by $\EE[\GG]$, the number of vertices by $n$, and the number of edges by $m$. Given a graph $\GG=(\VV,\EE)$, a sub-graph $\HH=(\VV^\pr,\EE^\pr)$ is a graph such that (i) $\VV^\pr\subseteq\VV$, (ii) $\EE^\pr\subseteq\EE$, and (iii) for every $(x,y)\in\EE^\pr$, we have $x,y\in\VV^\pr$. A sub-graph \HH of a graph \GG is called a {\em spanning sub-graph} if $\VV[\HH]=\VV[\GG]$ and {\em induced sub-graph} if $\EE[\HH]=\{(x,y)\in\EE[\GG]: x,y\in\VV[\HH]\}$. Given an induced path $P$ of a graph, we define \emph{end vertex} as vertex with $0$ outdegree in $P$ and \emph{start vertex} as a vertex with $0$ indegree in $P$.

\subsection{Problem Definition}

We now define our problem formally.

\begin{mdframed}
	\begin{definition}[\rd]\label{def:rd}
		Given a directed graph $\GG=(\VV,\EE)$ (also known as delegation graph) with the set $\TT\subseteq\VV$ as its set of sink vertices and an integer \lambda, decide if there exists a spanning sub-graph $\HH\subseteq\GG$ such that
		\begin{enumerate}[(i)]
			\item The out-degree of every vertex in $\VV\setminus\TT$ is exactly $1$
			\item For every sink vertex $t\in\TT$, the number of vertices (including $t$) in \VV which has a path to $t$ in the sub-graph \HH is at most \lambda
		\end{enumerate}
		We denote an arbitrary instance of \rd by $(\GG,\lambda)$. 
	\end{definition}
\end{mdframed}

In the spanning sub-graph $\HH\subseteq\GG$ , if there is a path from $u$ to $v$ in \HH such that all the vertices on this path except $v$ has out-degree $1$, then we say that vertex $u$ {\em delegates} to vertex $v$. In any spanning sub-graph $\HH\subseteq\GG$ with the out-degree of every vertex in $\VV\setminus\TT$ is exactly $1$ (we call sub-graph \HH a feasible solution), weight of a tree rooted at the sink vertex $u$ is the number of vertices (including $u$) that have a directed path to $u$. We study parameterized complexity of \rd with respect to $t$, \lambda, and the maximum degree \Delta of the input graph as our parameters. In the optimization version of \rd, we aim to minimize \lambda.
\section{Results: Algorithmic Hardness} \label{sec:hardness}

Our first result shows that \rd is \NPC even if we have only $3$ sink vertices. For that, we exhibit reduction from the \tvdp problem. 

\begin{definition}[\tvdp]
	Given a directed graph $\GG=(\VV,\EE)$, two pairs $(s_1,t_1)$ and $(s_2,t_2)$ of vertices which are all different from each other, compute if there exists two vertex disjoint paths $\PP_1$ and $\PP_2$ where $\PP_i$ is a path from $s_i$ to $t_i$ for $i\in[2]$. We denote an arbitrary instance of it by $(\GG,s_1,t_1,s_2,t_2)$.
\end{definition}

We know that \tvdp is \NPC~\cite{fortune1980directed}.
The idea is to add paths containing large number of nodes in the instance of \rd which we are creating using the instance of \tvdp. This key idea will make both the instances equivalent.

\begin{theorem}\label{thm:t}
	The \rd problem is \NPC even if we have only $3$ sink vertices. In particular, \rd is \PNPH with respect to the parameter $t$.
\end{theorem}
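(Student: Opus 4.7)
The plan is to prove both \NPC and \PNPH via a single polynomial-time reduction from \tvdp, which is \NPC by Fortune et al.~\cite{fortune1980directed}; the parametric hardness follows because the reduction will use exactly three sinks regardless of the input size. Membership in \NP is immediate: a non-deterministic machine guesses the spanning sub-graph \HH and verifies in polynomial time both conditions of \Cref{def:rd}.

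Given a \tvdp instance $(\GG, s_1, t_1, s_2, t_2)$ with $|\VV|=n$, I would build the \rd instance $(\GG',\lambda)$ by starting from a disjoint copy of \GG and adjoining two long directed paths $P_1$ and $P_2$, each on $L$ fresh vertices, such that the final vertex of $P_i$ has its unique out-edge into $s_i$ (these are the ``large paths'' called for in the hint and they are captive to $s_i$). I would then introduce a new vertex $t_3$ along with an edge $v\to t_3$ from every $v\in\VV\setminus\{t_1,t_2\}$, ensuring that each non-sink has at least one legal delegation target. Finally I would declare $\TT'=\{t_1,t_2,t_3\}$ and set $L$ and $\lambda$ tightly, for instance $L=n+1$ and $\lambda=L+n$.

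For the forward direction, given vertex-disjoint paths $\PP_1$ and $\PP_2$ witnessing the \tvdp solution, I route each captive path $P_i$ through $s_i$ along $\PP_i$ into $t_i$ and send every remaining vertex of $\VV$ directly to $t_3$ through its new edge. Each sink's weight is then bounded by $L+n$, so $(\GG',\lambda)$ is a \yes-instance. For the converse, captivity forces each $s_i$ to lie in the in-tree of some sink whose subtree already has size at least $L+1$, so for $L$ comfortably larger than $\lambda/2$ the sinks hosting $s_1$ and $s_2$ are distinct. Since in any feasible \HH every non-sink has out-degree exactly one, the in-trees rooted at distinct sinks are vertex-disjoint; a short case analysis then extracts the required pair of vertex-disjoint directed paths in the original \GG.

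The main obstacle I anticipate is ruling out the ``swapped'' routing $s_1\leadsto t_2$ and $s_2\leadsto t_1$, which the basic weight cap alone does not forbid. I would resolve this either by choosing asymmetric captive-path lengths $L_1\neq L_2$ so that only one matching fits under the cap, or by inserting a small private gadget between $s_i$ and $t_i$ that forbids crossings; either refinement remains polynomial-time and preserves the three-sink property, so \rd inherits \NPC and hence \PNPH with respect to the parameter $t$.
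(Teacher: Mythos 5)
Your overall strategy -- reduce from \tvdp, attach long ``captive'' paths feeding into $s_1,s_2$, add a third sink to absorb the unused vertices, and note that three sinks suffice for \PNPH{} -- is exactly the paper's plan. However, as specified the reduction is not correct, for two reasons, only one of which you noticed.

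The unacknowledged and fatal gap is that your third sink $t_3$ carries no pre-loaded weight, so a captive path can escape into it. With $L=n+1$ and $\lambda=L+n=2n+1$, routing $P_1\to s_1\to t_3$ (using the edge $s_1\to t_3$ you added) together with all remaining original vertices gives $t_3$ weight at most $1+L+(n-2)<\lambda$, while $P_2\to s_2\leadsto t_2$ along \emph{any} single path costs $t_2$ at most $L+n=\lambda$. Hence a \tvdp{} \no-instance that merely has one path from $s_2$ to $t_2$ maps to a \yes-instance of \rd. Your captivity argument only shows each $s_i$ lands at a sink of weight at least $L+1$ and that the two sinks are distinct; it does not prevent one of them from being $t_3$. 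The paper closes this hole by attaching a path $\DD_3$ of $15n$ fresh vertices to $t_3^\pr$, so that $t_3^\pr$'s residual capacity $\lambda-15n=2n$ is strictly smaller than either captive prefix; you need the analogous pre-load.

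The second gap is in your proposed fix for the swapped routing $s_1\leadsto t_2$, $s_2\leadsto t_1$. Making only the \emph{prefix} lengths asymmetric ($L_1\ne L_2$) does not work: under both the correct and the swapped matching the maximum sink load is $\max(L_1,L_2)+O(n)$, so no choice of $L_1,L_2$ separates them. The paper's device is to pre-load the sinks asymmetrically as well: it interposes suffix paths $\DD_1^\pr,\DD_2^\pr$ between $a_{t_i}$ and fresh sinks $t_i^\pr$ with $|\DD_1|=|\DD_2^\pr|=10n$ and $|\DD_2|=|\DD_1^\pr|=5n$, so the correct matching costs $15n+O(n)\le 16n$ per sink while the swap forces $10n+10n=20n>\lambda=17n$ onto $t_2^\pr$. (Using fresh sinks $t_i^\pr$ also sidesteps the unstated step of deleting the out-edges of $t_1,t_2$ to make them sinks.) Your alternative ``private gadget between $s_i$ and $t_i$'' is not developed enough to evaluate, since any such gadget would have to act through the shared graph \GG. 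Once both repairs are made, your disjointness argument (out-degree one forces in-trees of distinct sinks to be vertex-disjoint) is the same as the paper's and is fine.
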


\begin{proof}
	The \rd problem clearly belongs to \NP. To show its \NP-hardness, we reduce from \tvdp. Let $(\GG=(\VV,\EE),s_1,t_1,s_2,t_2)$ be an arbitrary instance of \tvdp. Let $n=|\VV|$. We consider the following instance $(\GG^\pr=(\VV^\pr,\EE^\pr),\lambda)$.
	\begin{align*}
		\VV^\pr &= \{a_v: v\in\VV\} \cup \DD_1 \cup \DD_1^\pr\cup \DD_2\cup \DD_2^\pr\cup \DD_3 \text{ where}\\
		& |\DD_1|=|\DD_2^\pr|=10n, |\DD_1^\pr|=|\DD_2|=5n,|\DD_3|=15n\\
		\EE^\pr &= \{(a_u, a_v): (u,v)\in\EE \} \cup \FF 
	\end{align*}
	
	We now describe the edges in \FF. Each $\DD_1, \DD_1^\pr, \DD_2,\DD_2^\pr$ and $\DD_3$ induces a path in $\GG^\pr$ and thus the edges in these paths are part of \FF. The end vertices of the path induced on $\DD_1$ and $\DD_2$ be respectively $d_1$ and $d_2$. The start vertices of the path induced on $\DD_1^\pr$ and $\DD_2^\pr$ be respectively $d_1^\pr$ and $d_2^\pr$. The end vertices of the path induced on $\DD_1^\pr,\DD_2^\pr$ and $\DD_3$ be $t_1^\pr,t_2^\pr$ and $t_3^\pr$ respectively. The set \FF also contains the edges in $\{(d_1,a_{s_1}), (d_2,a_{s_2}), (a_{t_1},d_1^\pr), (a_{t_2},d_2^\pr)\}$. \FF also contains edge $(a_v, t_3^\pr)$ $\forall v \in \VV$. This finishes the description of \FF and thus the description of $\GG^\pr$. We observe that $\GG^\pr$ has exactly $3$ sink vertices, namely $t_1^\pr,t_2^\pr$ and $t_3^\pr$. Finally we define $\lambda=17n$. We claim that the two instances are equivalent.
	
	In one direction, let us assume that the \tvdp instance is a \YES instance. For all $i\in[2]$, let $\PP_i$ be a path from $s_i$ to $t_i$ in \GG such that $\PP_1$ and $\PP_2$ are vertex disjoint. We  build the solution \HH for \rd by first adding the set of edges $\{(u,v)|\text{outdegree of $u$ is 1}\}$. Then we add the paths $\PP_1$ and $\PP_2$. Then we add the edges $(a_{t_1},d_1^\pr), (a_{t_2},d_2^\pr)$. Then for each vertex $u$ in the set $\VV^r$ $=\{a_v | v \in \VV\} \backslash \VV[\PP_1 \cup \PP_2] $, add the edge $(u,t_3^\pr)$ to \HH.

	We observe that the out degree of every vertex is exactly $1$ in $\HH$ except the sink vertices in $\GG^\pr$ (which are $t_1^\pr,t_2^\pr$ and $t_3^\pr$). Also since \HH contains the path $\PP_i$, every vertex in $\DD_i$ has a path to $t_i^\pr$ for $i\in[2]$. Of course, every vertex in $\DD_i^\pr$ has a path to $t_i^\pr$ for $i\in[2]$ and every vertex in $\DD_3$ delegates to $t_3^\pr$. Hence $\forall i\in[3]$, the number of vertices which has a path to $t_i$ in $\HH^\pr$ is at most $16n$ which is less than \lambda. Hence the \rd instance is a \YES instance.
	
	In the other direction, let us assume that the \rd instance is a \YES instance. Let $\HH^\pr=(\VV^\pr,\EE^\prr)\subseteq\GG^\pr$ be a spanning sub-graph of $\GG^\pr$ such that (i) the out degree of every vertex which is not a sink is exactly $1$, (ii) there are at most $\lambda\;(=17n)$ vertices (including the sink nodes) in $\HH^\pr$ which has a path to $t_i^\pr$ for $i\in[3]$. Note that $a_{s_1}$ must have a path $\PP_1^\pr$ to $a_{t_1}$ in $\HH^\pr$ otherwise at least $20n$ vertices have path to either $t_{2}^\pr$ or $t_3^\pr$ in $\HH^\pr$ which is a contradiction (since $\lambda=17n$). Similarly $a_{s_2}$ must have a path $\PP_2^\pr$ to $a_{t_2}$ in $\HH^\pr$ otherwise at least $20n$ vertices have path to either $t_{1}^\pr$ or $t_3^\pr$ in $\HH^\pr$ which is a contradiction (since $\lambda=17n$). Since, for $i\in[2]$, we have a path $\PP_i^\pr$ from $a_{s_i}$ to $a_{t_i}$ in $\HH^\pr$ and the out-degree of every vertex in $\HH^\pr$ except $t_1^\pr,t_2^\pr$ and $t_3^\pr$ is $1$, the paths $\PP_1^\pr$ and $\PP_2^\pr$ are vertex disjoint. We define path $\PP_i=\{(u,v): (a_u,a_v)\in\PP_i^\pr\}$ in \GG for $i\in[2]$. Since $\PP_1^\pr$ and $\PP_2^\pr$ are vertex disjoint, it follows that $\PP_1$ and $\PP_2$ are also vertex disjoint. Thus the \rd instance is a \YES instance.

\end{proof}
\qed

We next show that \rd is \NPC even if we have $\lambda=3$ and $\Delta=3$. For that we exhibit a reduction from \tsat which is known to be \NPC~\cite{berman2004approximation}.

\begin{definition}[\tsat]
	Given a set $\XX=\{x_i: i\in[n]\}$ of $n$ variables and a set $\CC=\{C_j: j\in[m]\}$ of $m$ $3$-CNF clauses on \XX such that, for every $i\in[n]$, $x_i$ and $\bar{x}_i$ each appear in exactly $2$ clauses, compute if there exists any Boolean assignment to the variables which satisfy all the $m$ clauses simultaneously. We denote an arbitrary instance of \tsat by $(\XX,\CC)$.
\end{definition}

For each literal and clause in \tsat we add a node in the instance of \rd and we add some special set of edges and nodes so that $\lambda=3$ and both the out-degree and in-degree of every vertex is at most $3$
\begin{theorem}\label{thm:lambda_D}
The \rd problem is \NPC even if we have $\lambda=3$ and both the out-degree and in-degree of every vertex is at most $3$. In particular, \rd is \PNPH with respect to the parameter $(\lambda,\Delta)$.
\end{theorem}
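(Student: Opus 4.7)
The plan is to reduce from \tsat, which is known to be \NPC~\cite{berman2004approximation}. Given an instance $(\XX,\CC)$ on $n$ variables and $m$ clauses, I build a delegation graph $\GG^\pr$, fix $\lambda=3$, and argue equivalence with satisfiability of $(\XX,\CC)$. Membership of \rd in \NP is immediate, so I focus on the hardness direction and on keeping the in- and out-degrees at most $3$.

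The construction uses a variable gadget for each $x_i \in \XX$ and a clause gadget for each $C_j \in \CC$, together with literal-occurrence vertices bridging the two. For each variable $x_i$ I introduce four literal-occurrence vertices, two per sign; each has one outgoing edge into its containing clause gadget and one outgoing edge into the variable gadget for $x_i$. The variable gadget contains one or more sinks and short auxiliary paths whose lengths are chosen so that the capacity budget of $3$ at each such sink is exhausted exactly when all literal-occurrences of a single polarity are absorbed, forcing a binary choice: either both $x_i$-occurrences feed into the variable gadget (interpreted as $x_i = \false$, making those occurrences unavailable to their clauses) or both $\bar x_i$-occurrences do (interpreted as $x_i = \true$). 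The clause gadget for $C_j$ is built symmetrically: the clause sink $c_j$ together with its padding has a budget that requires at least one of the three literal-occurrences of $C_j$ to delegate into $c_j$, which is possible only when that literal is set to \true.

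For the forward direction, a satisfying assignment yields a feasible $\HH$ by consuming false-literal occurrences inside the variable gadgets and routing one true-literal occurrence per clause into the corresponding clause sink; a direct count verifies that every sink has weight at most $3$. Conversely, from any feasible $\HH$ I extract an assignment from the polarity consumed by each variable gadget, and the clause-sink capacity then guarantees that each clause contains at least one \true literal. The assumption that each literal appears in exactly two clauses is what keeps every degree at most $3$: each literal-occurrence vertex has out-degree at most $2$ and in-degree at most $1$, and every sink or auxiliary vertex has at most three incident edges.

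The main obstacle is the joint-consumption property of the variable gadget: with $\lambda = 3$ and $\Delta = 3$ simultaneously tight, there is essentially no slack to prevent ``mixed'' routings in which only one of the two same-polarity occurrences of a variable enters the gadget. The delicate step is to insert internal bridge vertices whose forced out-degree-$1$ delegations, combined with the sink budget of $3$, rule out every mixed routing while still leaving both canonical true/false routings feasible. Once the gadgets are dimensioned correctly, both directions of the equivalence reduce to a routine counting exercise at each sink.
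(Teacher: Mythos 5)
Your high-level strategy (reduce from \tsat, exploit the ``each literal occurs in exactly two clauses'' property to keep degrees at $3$, and use the capacity $\lambda=3$ to encode a binary choice per variable) is the same as the paper's, but the construction you sketch runs the delegations in the opposite direction and this creates a genuine gap. The central problem is your clause gadget: you assert that ``the clause sink $c_j$ together with its padding has a budget that \emph{requires} at least one of the three literal-occurrences of $C_j$ to delegate into $c_j$.'' In \rd the only constraints are that every non-sink has out-degree exactly $1$ and that every sink has weight \emph{at most} $\lambda$; an upper bound on a sink's weight can never force a minimum number of delegations into that sink. A clause whose three occurrence vertices all disappear into their variable gadgets leaves $c_j$ at weight $1$, which is perfectly feasible, so an unsatisfiable instance need not map to a \NO instance. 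In addition, your forward direction is also at risk: if all three literals of a clause are \true, all three occurrences find their variable gadgets full and are forced into $c_j$, giving it weight $4>\lambda$. Finally, the one piece you correctly identify as the crux --- a variable gadget with $\lambda=\Delta=3$ that accepts both occurrences of one polarity, or both of the other, but never a mix --- is exactly what you do not construct; flagging it as ``the delicate step'' and asserting it can be dimensioned is where the proof actually lives.

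The paper avoids all of this by inverting the direction of delegation. The sinks are the literal vertices $a_i=f(x_i)$ and $\bar a_i=f(\bar x_i)$; each clause vertex $y_j$ has out-edges to the three sinks of its literals and, being a non-sink, \emph{must} delegate to one of them --- that out-degree-$1$ requirement is the correct mechanism for ``at least one literal is true,'' and it is a forcing constraint that \rd genuinely provides. The truth assignment is encoded by a two-vertex path $d_{i,2}\to d_{i,1}$ where $d_{i,1}$ has out-edges to both $a_i$ and $\bar a_i$: whichever sink it chooses is saturated to weight exactly $3$ and can accept no clause vertex (that literal is \false), while the other sink has $2$ units of slack, exactly enough for the at most $2$ clauses in which that literal occurs. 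Because the per-variable choice is a single out-degree-$1$ decision at $d_{i,1}$, the ``mixed routing'' problem you struggle with never arises. If you want to salvage your orientation, you would need both a concrete variable gadget enforcing joint consumption and a different mechanism (not a sink budget) to force clause satisfaction; as written, the reduction does not establish hardness.
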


\begin{proof}
	The \rd problem clearly belongs to \NP. To show its \NP-hardness, we reduce from \tsat. Let $(\XX=\{x_i:{i\in[n]}\}, \CC=\{C_j: j\in[m]\})$ be an arbitrary instance of \tsat. We define a function $f:\{x_i,\bar{x}_i: i\in[n]\}\longrightarrow\{a_i,\bar{a}_i: i\in[n]\}$ as $f(x_i)=a_i$ and $f(\bar{x}_i)=\bar{a}_i$ for $i\in[n]$. We consider the following instance $(\GG=(\VV,\EE),\lambda)$.
	\begin{align*}
		\VV &= \{a_i, \bar{a}_i, d_{i,1}, d_{i,2}: i\in[n]\} \cup \{y_j: j\in[m]\}\\
		\EE &= \{(y_j,f(l_1^j)), (y_j,f(l_2^j)), (y_j,f(l_3^j)): C_j=(l_1^j\vee l_2^j\vee l_3^j), j\in[m]\}\\
		&\cup \{(d_{i,2}, d_{i,1}), (d_{i,1}, a_i), (d_{i,1},\bar{a}_i): i\in[n]\}\\
		\lambda &= 3
	\end{align*}
	We observe that both the in-degree and out-degree of every vertex in \GG is at most $3$. Also $\Delta=3$. We now claim that the two instances are equivalent.
	
	Suppose the \tsat instance is a \YES instance. Let $g:\{x_i: i\in[n]\}\longrightarrow\{\true, \false\}$ be a satisfying assignment of the \tsat instance. We define another function $h(g,j)=f(l), j\in[m],$ for some literal $l$ which appears in the clause $C_j$ and $g$ sets it to \true. We consider the following sub-graph $\HH\subseteq\GG$
	\begin{align*}
		\EE[\HH] &= \{(d_{i,2}, d_{i,1}): i\in[n]\}\\
		&\cup \{(d_{i,1},a_i): i\in[n], g(x_i)=\false\}\\
		&\cup \{(d_{i,1},\bar{a}_i ): i\in[n], g(x_i)=\true\}\\
		&\cup \{(y_j, h(g,j)): j\in[m]\}
	\end{align*}
	We observe that \HH is a spanning sub-graph of \GG such that (i) every non-sink vertices in \GG has exactly one outgoing edge in \HH and (ii) for each sink vertex in \GG, there are at most $3$ vertices (including the sink itself) which has a path to it. Hence the \rd instance is a \YES instance.
	
	In the other direction, let the \rd instance is a \yes instance. Let $\HH\subseteq\GG$ be a sub-graph of \GG such that (i) every non-sink vertices in \GG has exactly one outgoing edge in \HH and (ii) for each sink vertex in \GG, there are at most $3$ vertices (including the sink itself) which has a path to it. We define an assignment $g:\{x_i: i\in[n]\}\longrightarrow\{\true, \false\}$ as $g(x_i)=\false$ if $(d_{i,1},a_i)\in\EE[\HH]$ and \true otherwise. We claim that $g$ is a satisfying assignment for the \tsat instance. Suppose not, then there exists a clause $C_j=(l_1^j\vee l_2^j\vee l_3^j)$ for some $j\in[m]$ whom $g$ does not satisfy. 
We define functions $f_1,f_2:\{x_i,\bar{x}_i: i\in[n]\}\longrightarrow\{d_{i,1},d_{i,2}: i\in[n]\}$ as $f_1(x_i)=f_1(\bar{x}_i)=d_{i,1}$ and $f_2(x_i)=f_2(\bar{x}_i)=d_{i,2}$. We observe that the sink vertex $f(l_i^j)$ is reachable from both $f_1(l_i^j)$ and $f_2(l_i^j)$ in \HH for every $i\in[3]$. Since $\lambda=3$, we do not have a path from $y_j$ to any of $f(l_i), i\in[3]$ which is a contradiction since the non-sink vertex $y_j$ must have out-degree $1$ in \HH.
 Hence $g$ is a satisfying assignment for the \tsat instance and thus the instance is a \yes instance.
\end{proof}
\qed
\begin{corollary}\label{cor:lambda_D_bipartite}
Given that the input graph is both bipartite and directed acyclic graph, the \rd problem is \NPC even if we have $\lambda=3$ and both the out-degree and in-degree of every vertex is at most $3$ which concludes that \rd is \PNPH with respect to the parameter $(\lambda,\Delta)$.
\end{corollary}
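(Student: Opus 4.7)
The plan is to observe that the reduction already constructed in the proof of Theorem~\ref{thm:lambda_D} produces an instance whose underlying graph is both bipartite and acyclic, so no new reduction is required---only a structural inspection of the graph $\GG = (\VV,\EE)$ built there.

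First I would verify that $\GG$ is a DAG by exhibiting a topological ordering. In the construction, every vertex $y_j$ has only outgoing edges and every vertex $d_{i,2}$ has only the outgoing edge $(d_{i,2},d_{i,1})$, so both are sources. The vertices $d_{i,1}$ appear as heads only of the edges $(d_{i,2},d_{i,1})$ and as tails only of $(d_{i,1},a_i)$ and $(d_{i,1},\bar a_i)$. Finally, $a_i$ and $\bar a_i$ receive edges from the $y_j$'s and the $d_{i,1}$'s but send none. Listing the sources first, then the $d_{i,1}$'s, then the $a_i,\bar a_i$'s yields a valid topological order, so $\GG$ is acyclic.

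Next I would verify bipartiteness by giving an explicit $2$-colouring. Taking $A = \{a_i,\bar a_i : i\in[n]\} \cup \{d_{i,2}:i\in[n]\}$ and $B = \{d_{i,1}:i\in[n]\} \cup \{y_j:j\in[m]\}$, a case analysis on the four edge families $(y_j,f(l_k^j))$, $(d_{i,2},d_{i,1})$, $(d_{i,1},a_i)$, $(d_{i,1},\bar a_i)$ shows that each edge has exactly one endpoint in $A$ and one in $B$, so $(A,B)$ is a proper bipartition.

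With these two structural facts in place, the corollary is immediate from Theorem~\ref{thm:lambda_D}: the same polynomial-time reduction from \tsat already witnesses the \NPC-ness of \rd on the restricted class of bipartite DAGs with $\lambda=3$ and $\Delta=3$, and the para-\NP-hardness with respect to $(\lambda,\Delta)$ follows since this restricted class is a special case of the general one. There is no substantive obstacle; the only care I would take is to enumerate the edge families of $\GG$ exhaustively rather than sampling a few, since a single overlooked edge inside one of the two sides would break the bipartition claim.
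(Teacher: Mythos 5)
Your proposal is correct and takes essentially the same approach as the paper: it reuses the reduction from Theorem~\ref{thm:lambda_D} and simply checks that the constructed graph is acyclic and bipartite, with the identical bipartition $\{a_i,\bar a_i,d_{i,2}\}$ versus $\{d_{i,1},y_j\}$. Your explicit topological ordering is slightly more detailed than the paper's one-line acyclicity claim, but the argument is the same.
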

\begin{proof}
The corollary follows as the resulting graph \GG from reduction of \tsat instance in Theorem \ref{thm:lambda_D} is bipartite as $\VV$ can be partitioned into 2 independent sets $\VV_1=\{y_j:j\in[m]\}\cup\{d_{i,1}:i\in[n]\}$ and $\VV_2=\{a_i,\bar{a}_i,d_{i,2}:i\in[n]\}.$ Also \GG is Directed Acyclic graph as it doesn't have directed cycles.
\end{proof}
\qed

\begin{definition}
A (positive integral) $edge$ $weighting$ of a graph $G$ is a mapping $w$ that assigns to each edge of $G$ a positive integer.
\end{definition} 
\begin{definition}
An $orientation$ of $G$ is a mapping $\Lambda:E(G)\rightarrow V(G)\times V(G)$ with $\Lambda((u,v))\in \{(u,v),(v,u)\}$. 
\end{definition}
\begin{definition}
The $weighted$ $outdegree$ of a vertex $v\in V(G)$ w.r.t an edge weighting $w$ and an orientation $\Lambda$ is defined as $d^{+}_{G,w,\Lambda}(v)=\sum_{(v,u)\in E(G)\text{ with } \Lambda((v,u))=(v,u)}w((v,u))$.
\end{definition}

\begin{definition}
(\mmo).  Given a graph $G$, an edge weighting $w$ of $G$ in unary and a positive integer $r$, is there an orientation $\Lambda$ of $G$ such that $d^{+}_{G,w,\Lambda}(v)\leq r$ for each $v\in V(G)$?
\end{definition}

\begin{lemma}\cite{szeider2011not}
\mmo is W[1]-hard when parameterized by the treewidth of the instance graph
\end{lemma}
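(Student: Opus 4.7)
Since the statement is attributed to Szeider (2011), in a write-up I would simply cite that paper; but to reconstruct the argument, my plan is to exhibit a parameterized reduction from a problem already known to be W[1]-hard when parameterized by treewidth. The two natural source problems are \textsc{List Coloring} and \textsc{Equitable Coloring}, both shown W[1]-hard in treewidth by Fellows, Fomin, Lokshtanov, Rosamond, Saurabh, Szeider and Thomassen. I would take \textsc{List Coloring} as the source: given a graph $G$ with a list $L(v)$ for every $v\in V(G)$ and treewidth $w$, I want to construct in polynomial time an \mmo instance $(G',w',r)$ of treewidth $f(w,|L|)$ that is a yes-instance iff the coloring instance is.

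The key design idea is to encode the choice of a color for $v$ as an orientation choice at a dedicated ``selector'' vertex $s_v$. For each $v$, attach pendant edges $\{s_v,p_{v,c}\}$ for $c\in L(v)$ with positive integer weights chosen so that the outdegree budget $r$ at $s_v$ permits exactly one of those edges to be oriented outward from $s_v$; the identity of that edge then encodes the color assigned to $v$. For every edge $uv\in E(G)$ and every common color $c\in L(u)\cap L(v)$, insert a small constant-size ``conflict gadget'' on a few fresh vertices linking $p_{u,c}$ and $p_{v,c}$ whose weighted-outdegree inequalities are jointly satisfiable precisely when not both of the $c$-edges at $s_u$ and $s_v$ are oriented outward. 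The weights in the conflict gadget need to be chosen so that the surplus weight has ``nowhere to go'' exactly in the bad case; using distinct powers of two or tightly calibrated differences is a standard way to make the accounting unambiguous.

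The two things that need checking afterwards are semantic correctness of the gadgets under every orientation, and a treewidth bound on $G'$. Correctness reduces to a finite case analysis once the weights are fixed. For the treewidth bound, I would start from a width-$w$ tree decomposition of $G$ and extend every bag containing $v$ by $s_v$ and by the at most $|L(v)|$ pendant vertices $p_{v,c}$; for each edge $uv$, a bag containing both $u$ and $v$ is further enlarged by the $O(1)$ fresh vertices of the conflict gadgets for that pair. This yields width $O(w+|L|)$, which suffices for a parameterized reduction since the parameter is allowed to depend on both the source treewidth and the (constant-size) list universe.

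The step I expect to be the main obstacle is calibrating the edge weights so that the local outdegree bounds simultaneously (i) force exactly one selector edge to be outward-oriented, (ii) are tight enough that the conflict gadget has no feasible orientation when both endpoints pick the same color, and (iii) remain loose enough to be satisfiable in every non-conflicting case. Once a consistent weight scheme is in place, the equivalence of the two instances and the treewidth bound follow by routine verification.
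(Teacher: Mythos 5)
The paper itself offers no proof of this lemma: it is imported verbatim from Szeider's work, so your first instinct---cite \cite{szeider2011not} and move on---is exactly what the authors do, and for the purposes of this paper that is all that is required.

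Your attempted reconstruction, however, has two genuine gaps. First, the treewidth accounting fails. \textsc{List Coloring} is \FPT parameterized by treewidth whenever the total number of colors is bounded (a standard dynamic program with $|L|^{tw+1}$ states per bag), so the W[1]-hard instances necessarily have an unbounded color universe; your claimed bound $O(w+|L|)$ on the treewidth of $G'$ is therefore not a function of the source parameter, and the parenthetical ``constant-size list universe'' is false precisely where it matters. Worse, the construction as described can genuinely blow up the treewidth: take $G$ to be a star with center $u$ and leaves $v_1,\dots,v_d$ and all lists equal to $[n]$. Contracting each conflict gadget to a single edge and each $p_{v_i,c}$ into $s_{v_i}$ exhibits $K_{n,d}$ as a minor of $G'$, so $tw(G')\ge\min(n,d)$ while $tw(G)=1$. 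Second, the heart of the argument---the conflict gadget---is never actually constructed, and its existence is not routine: weighted-outdegree constraints only \emph{cap} outflow, so orienting every edge of a gadget toward a hub trivially satisfies that hub's bound; forcing ``at least one selector edge outward'' or ``not both outward'' requires pressure propagated through carefully calibrated weights across the whole construction, which is exactly the step you defer. For the record, the cited proof goes a different way: it reduces from \textsc{Multicolored Clique}, encoding the identity of the vertex selected in each color class arithmetically in the edge weights (which is also why the weights must be given in unary for the statement to be tight); that route sidesteps both issues above.
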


We now show that \rd is W[1]-hard when parameterized by the treewidth of the instance graph. We reduce from \mmo with instance graph $G$ to \rd  by first creating a replica of the $G$ and then taking an edge $(u,v)$ with weight $w$ and replacing it with a path of $w$ nodes with the end vertex having edges to $u$ and $v$. 

\begin{theorem}\label{thm:treewidth}
\rd is W[1]-hard when parameterized by the treewidth of the instance graph
\end{theorem}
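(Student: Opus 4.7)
The plan is to formalize the construction sketched above and verify two things: that the reduction is correct, and that it preserves treewidth up to a small additive constant. Since the weights in the \mmo instance are given in unary, the constructed graph has size polynomial in the input, so the reduction runs in polynomial time.

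Formally, given an \mmo instance $(G, w, r)$, I would build $\GG^\pr = (\VV^\pr, \EE^\pr)$ with $\VV^\pr = V(G) \cup \{p_i^e : e \in E(G),\, 1 \leq i \leq w(e)\}$, where the vertices of $V(G)$ play the role of sinks. The edges of $\GG^\pr$ would be $(p_i^e, p_{i+1}^e)$ for $1 \leq i < w(e)$, together with $(p_{w(e)}^e, u)$ and $(p_{w(e)}^e, v)$ for every edge $e = \{u, v\} \in E(G)$, and I would set $\lambda = r + 1$. In any feasible solution \HH, each internal path vertex has its outgoing edge forced, so the only choice is, for every edge $e$, which of the two outgoing edges of $p_{w(e)}^e$ to keep. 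Declaring ``$p_{w(e)}^e$ picks $u$'' to correspond to the orientation $\Lambda(e) = (u, v)$ in \mmo, each of the $w(e)$ vertices $p_1^e, \ldots, p_{w(e)}^e$ then delegates to $u$, so the weight of the tree rooted at sink $u$ equals exactly $1 + d^{+}_{G,w,\Lambda}(u)$. Hence \HH is feasible for $\lambda = r + 1$ precisely when $\Lambda$ satisfies $d^{+}_{G,w,\Lambda}(u) \leq r$ at every vertex, yielding the equivalence in both directions.

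For the parameter, I would show that the treewidth of the underlying undirected graph of $\GG^\pr$ is at most $\max\{k, 2\}$, where $k$ denotes the treewidth of $G$. Starting from a tree decomposition of $G$ of width $k$, for each $e = \{u, v\} \in E(G)$ I would pick a bag $B_e$ containing both $u$ and $v$ (which exists by the tree-decomposition axioms), attach a new leaf bag $L_e = \{u, v, p_{w(e)}^e\}$ adjacent to $B_e$, and then dangle from $L_e$ a path of two-element bags $\{p_{w(e)}^e, p_{w(e)-1}^e\}, \{p_{w(e)-1}^e, p_{w(e)-2}^e\}, \ldots, \{p_2^e, p_1^e\}$. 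Every edge of $\GG^\pr$ is covered, the continuity condition holds because each new path vertex appears only in adjacent new bags, and every newly added bag has size at most $3$, so the width remains $\max\{k, 2\}$.

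The subtle point to verify is that attaching gadgets for all edges of $G$ in parallel does not force any bag to contain many pairs of old vertices; this works only because each gadget is anchored at a bag that already contains both of its endpoints and because the gadget communicates with the rest of $\GG^\pr$ solely through $u$ and $v$. Combined with the fact that \mmo is \WOH parameterized by treewidth, this construction forms a parameterized reduction that establishes \rd is \WOH parameterized by the treewidth of the instance graph.
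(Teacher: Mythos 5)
Your proposal is correct and follows essentially the same route as the paper: a reduction from \mmo that replaces each weighted edge $\{u,v\}$ by a directed path of $w(e)$ new vertices whose head has two outgoing edges to the sinks $u$ and $v$, with $\lambda=r+1$, so that choosing which edge to keep encodes an orientation and the tree rooted at $u$ has weight $1+d^{+}_{G,w,\Lambda}(u)$. The only difference is that you spell out the tree-decomposition argument (obtaining width $\max\{k,2\}$) where the paper merely asserts $tw(\mathcal{H})\leq tw(G)+2$ as trivial, which is a welcome addition rather than a divergence.
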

\begin{proof}
To prove W[1]-Hardness we reduce from \mmo to \rd.  Let a graph $G(V,E)$ with an edge weighting $w$ in unary and a positive integer $r$ be an arbitrary instance of \mmo. \mmo is considered to be a YES instance if the weighted outdegree of every vertex is upper bounded by $r$.  Now using the instance of \mmo we create an instance $(\mathcal{H},r+1)$ of \rd. Let us construct a graph $\mathcal{H=(V,E)}$ where $\mathcal{V}=V_1\cup V_2$. $V_1=\{b_u:u\in V\}$. $\forall (u,v)\in E$ add the set of vertices $\{a_{uv_1},a_{uv_2},\ldots,a_{uv_{w(u,v)}}\}$ to $V_2$. $\forall (u,v) \in E$, $(a_{uv_1},b_u)\in\mathcal{E}$, $(a_{uv_1},b_v)\in\mathcal{E}$ and $\forall i \in [w(u,v)]\setminus \{1\}$, $(a_{uv_i},a_{uv_{i-1}})\in\mathcal{E}$. This completes the construction of $\mathcal{H}$ with $V_1$ as the sink nodes. It is trivial to observe the fact that $tw(\mathcal{H})\leq tw(G)+2$. We now prove that the \mmo is an YES instance iff the \rd is an YES instance\\
\\
Let \mmo be a YES instance. Let $\Lambda$ be the orientation  of $G$ which makes \mmo an YES instance. We consider the following sub-graph $\HH'\subseteq\HH$
	\begin{align*}
		\EE[\HH'] &= \{(a_{uv_i},a_{uv_{i-1}}): i\in[w(u,v)]\setminus\{1\},(u,v)\in E\}\\
		&\cup \{(a_{uv_1},b_u): (u,v)\in E, \Lambda((u,v))=(u,v)\}
	\end{align*}
	We observe that \HH' is a spanning sub-graph of \HH such that (i) every non-sink vertices in \HH has exactly one outgoing edge in \HH' and (ii) for each sink vertex in \HH, there are at most $r+1$ vertices (including the sink itself) which has a path to it. Hence the \rd instance is a \YES instance.\\
\\
Let \rd be a YES instance. Let \HH' be the spanning sub-graph of \HH which make \rd a YES instance. Let the edges in \HH' be denoted by \EE'. We consider the following orientation $\Lambda$ of $G$\\
$ \Lambda((u,v))= \left\{ \begin{array}{rcl}
(u,v) & \mbox{if} & (a_{uv_1},b_u)\in\EE'  \\ 
(v,u) & \mbox{otherwise} 
\end{array}\right.$
\\
Clearly weighted outdegree of every vertex in $G$ is atmost $r$. Therefore \mmo is an YES instance.\\
This concludes the proof of this theorem
\end{proof}
\qed
\begin{corollary}\label{cor:bip}
\rd is W[1]-hard when parameterized by the treewidth even when the input graph is both Bipartite and Directed Acyclic Graph. 
\end{corollary}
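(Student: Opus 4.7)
The plan is to show that the very graph $\HH$ constructed in the proof of \Cref{thm:treewidth} already happens to be both bipartite and acyclic, so the corollary follows from that theorem without any change to the reduction.

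First, I would certify that $\HH$ is a DAG by giving an explicit topological ordering: list every sink $b_u$ first, and then for each $(u,v)\in E(G)$ list the path vertices $a_{uv_1},a_{uv_2},\ldots,a_{uv_{w(u,v)}}$ in order of increasing index. Every edge of $\HH$, namely $(a_{uv_1},b_u)$, $(a_{uv_1},b_v)$, and $(a_{uv_i},a_{uv_{i-1}})$ for $i\ge 2$, goes from a later vertex to an earlier one in this ordering, so $\HH$ contains no directed cycle.

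Second, I would certify bipartiteness by $2$-colouring the underlying undirected graph. Place every sink $b_u$ together with every $a_{uv_i}$ for which $i$ is even on one side, and every $a_{uv_i}$ with $i$ odd on the other. Each internal path edge $\{a_{uv_i},a_{uv_{i-1}}\}$ automatically crosses the bipartition, and each attachment edge $\{a_{uv_1},b_u\}$, $\{a_{uv_1},b_v\}$ joins the odd-indexed vertex $a_{uv_1}$ with a sink on the even side. The only step that needs genuine care is ruling out odd cycles arising from cycles of $G$; I would dispose of it by observing that $a_{uv_1}$ is the unique branching vertex of the gadget for $(u,v)$ and is attached to both $b_u$ and $b_v$, so every cycle of length $k$ in the underlying graph of $G$ lifts to a cycle of length $2k$ in the underlying graph of $\HH$, which is even, and consistent $2$-colourability follows.

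Since the treewidth bound $tw(\HH)\le tw(G)+2$ and the equivalence with the \mmo instance were already established within the proof of \Cref{thm:treewidth} and depend only on the construction itself, no further argument is needed, and W[1]-hardness parameterized by treewidth transfers verbatim to inputs that are simultaneously bipartite and directed acyclic.
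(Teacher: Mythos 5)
Your proposal is correct and follows essentially the same route as the paper: the corollary is obtained by observing that the graph $\mathcal{H}$ built in the proof of \Cref{thm:treewidth} is already a DAG (no directed cycle) and bipartite (no odd cycle in the underlying undirected graph), so the hardness transfers directly. Your explicit topological ordering and explicit $2$-colouring (sinks and even-indexed path vertices on one side, odd-indexed on the other) are just more detailed certificates of the same two facts the paper asserts, and the extra remark about lifting cycles of $G$ is redundant once the $2$-colouring is verified edge by edge.
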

\begin{proof}
In the instance of \rd created in Theorem \ref{thm:treewidth}, graph $\mathcal{H}$ is Bipartite as there is no odd cycle in the underlying undirected graph.  Also graph $\mathcal{H}$ is Directed Acyclic Graph (DAG) as there is no directed cycle.
\end{proof}
\qed

\section{\FPT Algorithms} \label{sec:fptalg}

We now prresent our \FPT algorithms.

\begin{observation}\label{obs:lambda_t}
	There is a kernel for \rd consisting of at most $\lambda t$ vertices. In particular, there is an \FPT algorithm for the \rd problem parameterized by $(\lambda,t)$.
\end{observation}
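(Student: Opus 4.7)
The plan is to argue that in any feasible solution $\HH$, the vertex set $\VV$ decomposes into the disjoint union, over sinks $s\in\TT$, of the sets of vertices reaching $s$ in $\HH$, and that each such set has size at most $\lambda$; this immediately caps $|\VV|$ at $\lambda t$ in every \YES-instance, which in turn yields a one-shot kernelization rule.

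First I would verify the decomposition. By condition (i) of \Cref{def:rd}, every non-sink has exactly one outgoing edge in $\HH$. Following this unique out-edge from any non-sink in the acyclic $\HH$ (the acyclicity being the intended ``resolve'' interpretation of \rd, as emphasized in the abstract and introduction) must eventually terminate at some sink, so every $v\in\VV$ lies on a path to exactly one $s\in\TT$ in $\HH$. Combining this disjoint decomposition with condition (ii), which caps each sink's reach at $\lambda$, yields
\[
|\VV| \;=\; \sum_{s\in\TT} |\{v\in\VV : v \text{ has a path to } s \text{ in } \HH\}| \;\leq\; \lambda t.
\]

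The kernelization rule is then a single linear-time check: if $|\VV|>\lambda t$, output a trivial \NO-instance (correct by the contrapositive of the bound above); otherwise return $(\GG,\lambda)$ unchanged, which already has at most $\lambda t$ vertices and is therefore its own kernel. An \FPT algorithm with respect to $(\lambda,t)$ then follows by brute force on the kernel: enumerate, for each of the at most $\lambda t$ non-sinks, its choice of outgoing edge among the at most $\lambda t-1$ possibilities, and verify both conditions of \Cref{def:rd} on each resulting spanning sub-graph.

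The only delicate point is the acyclicity assumption used when claiming that every non-sink reaches some sink; without it one could cook up formally feasible solutions in which some non-sinks are trapped in out-degree-$1$ cycles and reach no sink, which would break the vertex decomposition and invalidate the bound. Since the intended ``resolve'' semantics of \rd excludes such solutions, this causes no real difficulty, and the rest of the argument is essentially an inequality.
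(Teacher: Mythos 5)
Your proposal is correct and takes essentially the same approach as the paper, whose entire proof is the one-line observation that $n>\lambda t$ forces a \NO instance; you simply fill in the vertex-decomposition argument (and the acyclicity caveat) that the paper leaves implicit with the word ``clearly.''
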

\begin{proof}
	If the number $n$ of vertices in the input graph is more than $\lambda t$, then the instance is clearly a \NO instance. Hence, we have $n\le\lambda t$.
\end{proof}
\qed
In this section we define the notion of weights for the nodes in the subgraph \HH of the delegation graph \GG. 
 We define weight of all nodes $u$ in \GG to be $1$. To get a notion of weight of a vertex $u$ in a subgraph \HH, it can be considered as a number which is one more than the number of nodes who have delegated their vote to $u$ and then have been removed from the graph \GG during the construction of \HH. If \HH is a forest such that every non-sink node has an outdegree $1$, then clearly the weight of the tree rooted at a sink node say $t$ is sum of the weights of the nodes in the tree.
We now show \rd is FPT w.r.t number of non-sink nodes by using the technique of bounded search tree by the branching on set of vertices satisfying some key properties.
\begin{theorem}\label{non_sink_FPT}
The \rd problem has a FPT with respect to the parameter $k$ which is the number of non-sink nodes in \GG (delegation graph).
\end{theorem}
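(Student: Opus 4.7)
The plan is to combine a bounded search tree over the non-sink structure with an auxiliary sub-routine that assigns roots to sinks. The key observation is that in any solution $\HH$, every non-sink has out-degree exactly $1$, so the restriction of $\HH$ to $\VV \setminus \TT$ is a functional graph on only $k$ vertices. I would first enumerate the possibilities for this non-sink functional graph via bounded search, and then independently decide, for each ``root'' (non-sink whose out-edge in $\HH$ points to a sink), which sink receives its subtree.

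I would drive a bounded search tree of depth $k$ as follows. At each recursive call, pick an undecided non-sink $v$ and branch on its out-edge choice: either (a) $\phi(v) = u$ for one of the at most $k - 1$ non-sinks $u$ with $(v, u) \in \EE$, or (b) declare $v$ to be a root, with $\phi(v)$ to be chosen later among sinks in $N^+(v) \cap \TT$. Each recursive step fixes one non-sink, so the tree has depth $k$ and branching factor at most $k$, yielding at most $k^k$ leaves. At each leaf the functional graph on non-sinks is fully specified; I would discard any leaf whose non-sink functional graph contains a cycle, and for each surviving leaf compute the subtree weight $w_r$ for every root $r$, noting that $\sum_r w_r = k$.

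For each surviving leaf I would solve the sink-assignment sub-problem: assign each root $r$ to a sink $\phi(r) \in N^+(r) \cap \TT$ so that every sink $t$ satisfies $1 + \sum_{r : \phi(r) = t} w_r \leq \lambda$. Since there are at most $k$ roots, I would enumerate the partitions of the root set (Bell number $B_k = 2^{O(k \log k)}$); each part represents the roots that share a common sink. For each partition, verify in polynomial time that every part has a non-empty intersection of admissible sinks, every part's total weight is at most $\lambda - 1$, and that distinct parts admit a system of distinct representatives in the bipartite graph between parts and their common sinks (checkable via Hall's theorem or bipartite matching). If some leaf yields a valid assignment, return \yes; otherwise, return \no.

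The main obstacle is keeping the branching factor bounded: a non-sink can have $|\TT|$ many sink neighbors, so directly branching on $\phi(v)$ over all of $N^+(v)$ would not be FPT in $k$. The fix is to separate the structural choices among non-sinks (enumerated by the bounded search tree with at most $k$ options per step) from the sink allocation (handled by the partition enumeration), giving total running time $k^k \cdot B_k \cdot \mathrm{poly}(n) = 2^{O(k \log k)} \cdot \mathrm{poly}(n)$, which is \FPT in $k$.
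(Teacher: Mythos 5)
Your proof is correct, and it takes a genuinely different route from the paper's. The paper also builds a bounded search tree, but drives it with reduction rules: contract every non-sink of out-degree one (accumulating weights), delete any non-sink of out-degree more than $2(k-1)$ and in-degree zero (such a vertex sees at least $k$ sinks, one of which no other non-sink can occupy), branch over the $2^{k'}$ subsets of in-neighbours of a remaining high-out-degree vertex, and finally brute-force once every non-sink has out-degree at most $2(k-1)$. You instead separate the two sources of choice: the functional graph induced on the $k$ non-sinks, which you enumerate with branching factor at most $k$ and depth $k$ because a non-sink has at most $k-1$ non-sink out-neighbours plus the single option ``become a root''; and the assignment of roots to sinks, which you solve exactly by enumerating the $2^{O(k\log k)}$ set partitions of the roots and checking, for each, the weight bound and a system of distinct representatives via bipartite matching. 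This buys a cleaner argument: you never have to control the number of \emph{sink} out-neighbours of a vertex (the difficulty that the paper's rule RD.3 and its branching rule exist to tame), and your sink-assignment step is exactly the if-and-only-if that correctness requires, whereas the paper's safeness argument for RD.3 tacitly assumes the accumulated weight of the deleted vertex does not already exceed $\lambda-1$. Both methods run in $2^{O(k\log k)}\cdot n^{O(1)}$ time. Your one interpretive choice---discarding leaves whose non-sink functional graph contains a cycle---matches the forest-structured solutions the paper itself assumes in \Cref{edges_deleted_FPT}, though the literal text of \Cref{def:rd} does not explicitly forbid cycles among non-sinks.
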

\begin{proof} Let us denote the problem instance by $(\GG,\lambda,k)$. Now we present the following reduction and branching rules.\\
 \textbf{Reduction RD.1.} If there is a vertex $v$ in $\VV$ with only one outgoing edge to a vertex $u$ ($u,v$ are distinct), delete  $v$ from graph and increase weight of $u$ by the weight of $v$. The incoming edges which were incident on $v$ (except the self loops if any) are now incident on $u$.\\
Safeness of Reduction RD.1. is trivial as a node $v$ with single outgoing edge can only delegate the votes it has got (this includes $v$'s own vote and the votes of other nodes who have delegated to $v$ so far) to the only neighbor $u$ it has got.\\
\textbf{Reduction RD.2} Remove self loops if any. \\
Safeness of Reduction RD.2. follows from the fact that no non-sink node can delegate to itself\\
\textbf{Reduction RD.3.} If \GG contains a non-sink node $v$ with outdegree more than $2(k-1)$ and indegree 0, delete $v$ from \GG. The new instance is $(\GG-v,\lambda,k-1)$\\
Safeness of Reduction RD.3. is due to the fact that if we have a vertex $v$ with outdegree greater than $2(k-1)$, it implies that it has an outgoing edge to at least $k$ sink nodes. Let us denote these sink nodes by set $S$. So, irrespective of the delegations made by other vertices, there will exist one sink node $t'\in S$ such that none of the other $k-1$ non-sink nodes have delegated to $t'$ and hence we can delegate $v$ to $t'$ and still not increase the maximum weight of the sink node.\\
\textbf{Branching B.1.}
Pick a vertex $v$ such that the outdegree is more than $2(k-1)$ and indegree is $k'>0$. Note that $k'\leq k-1$. Each of $k'$ nodes having an outgoing edge to $v$ can either delegate to $v$ or not delegate it. So we have $2^{k'}$ possibilities and hence we can create $2^{k'}$ subproblems. In each possibility if a node $u_1$ is delegating to $v$ then we delete all the outgoing edges of $u_1$ expect $(u_1,v)$ and if we have a node $u_2$ which doesn't delegate to $v$ then we delete the outgoing edge from $u_2$ to $v$.  In each of the $2^{k'}$ instances of graph created first apply R.D.1, then R.D.2, and then finally R.D.3. Now solve the problem recursively for each of the $2^{k'}$ instances created by considering each of them as a subproblem. If a non-sink node $u$ has delegated to $v$ then $u$ gets deleted due to R.D.1 and if none of the non-sink nodes delegate to $v$ then $v$ gets deleted to R.D.3. So therefore, the new parameter (number of non-sink nodes) for the smaller subproblems gets reduced by at least 1. \\
Given a directed delegation graph $\GG$, the algorithm works as follows. It first applies Reductions RD.1., RD.2.,RD.3. and Branching Rule B.1 exhaustively and in the same order. 
The parameter (number of non-sink nodes) decreases by at least $1$ for each of the subproblems as explained earlier. If we can't apply the branching rule B.1 to a given subproblem it implies that there is no non-sink node such that the outdegree is more than $2(k-1)$ and indegree is greater than 0. Also due to R.D.3 we don't have any non-sink node with outdegree more than $2(k-1)$ and indegree equal to $0$. So we can do a brute force by considering every possible delegations and solve this instance in $O(k^{k}\cdot n^{O(1)})$ running time. Note that our algorithm will only look at the feasible solutions of \rd while brute forcing for a subproblem.\\
Also since every node of bounded search tree splits into at most $2^{k-1}$ subproblems and height of the tree is $O(k)$, we get $f(k)$ leaves (where $f(k)$ is a function of $k$ only). Clearly the time taken at every node is bounded by $g(k)\cdot n^{O(1)}$ where $g(k)$ is a function of $k$ only. Thus, the total time used by the algorithm is at-most $O(f(k)\cdot g(k)\cdot n^{O(1)})$ which gives us an FPT for \rd.
\end{proof}
\qed
We now show \rd is FPT w.r.t number of edges to be deleted from delegation graph by using the technique of bounded search tree by the branching on set of edges satisfying some key properties.

\begin{theorem}\label{edges_deleted_FPT}
The \rd problem has a FPT with respect to the parameter $k$ which is the number of edges to be deleted from delegation graph.
\end{theorem}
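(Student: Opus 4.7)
The plan is to adapt the bounded search tree technique of \Cref{non_sink_FPT}, now branching on edges instead of on vertices. The key observation is that in any feasible solution \HH every non-sink vertex has out-degree exactly $1$ and every sink vertex has out-degree $0$, so $|\EE[\HH]|=|\VV\setminus\TT|$ is a number determined purely by the instance. Consequently, once the (forced) outgoing edges of sinks are removed, the number of edges still to be deleted in order to reach a feasible \HH is exactly the parameter $k$, and each branching step will commit to a single deletion that reduces the remaining budget by at least $1$.

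First I would apply a few forced reduction rules exhaustively: every edge leaving a sink is deleted (decreasing $k$ by $1$, no branching needed); every non-sink vertex with out-degree $0$ immediately produces a \NO answer; and every non-sink vertex with out-degree exactly $1$ retains its unique outgoing edge without branching. Once none of these rules fire any more, I pick any non-sink vertex $v$ with out-degree at least $2$ together with any two of its outgoing edges $(v,u_1)$ and $(v,u_2)$. Since $v$ can retain only one outgoing edge in \HH, at least one of $(v,u_1),(v,u_2)$ must be deleted in every feasible solution. I branch into two subproblems, deleting $(v,u_1)$ in the first and $(v,u_2)$ in the second; in each branch the parameter drops by $1$. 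This yields a search tree of depth at most $k$ and branching factor $2$, hence at most $2^k$ leaves.

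At every leaf all non-sink vertices have out-degree exactly $1$, so the spanning subgraph \HH is fully determined, and I only need to verify in polynomial time that for every sink $t\in\TT$ at most \lambda vertices have a path to $t$ in \HH, for instance by a reverse BFS from each sink. The overall running time is therefore $O(2^k\cdot n^{O(1)})$, which gives an \FPT algorithm parameterized by $k$. The main point to argue carefully is the exhaustiveness of the branching: I would verify by a simple induction on recursion depth that every feasible \HH must omit at least one of the two chosen edges at every branching step and is therefore reached in some leaf of the search tree; it is also worthwhile to prune branches early whenever the deletions performed so far already cause the weight of some sink to exceed \lambda.
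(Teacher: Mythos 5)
Your proposal is correct and follows essentially the same route as the paper: a bounded search tree of depth at most $k$ with two-way branching on which outgoing edges of a vertex of out-degree at least $2$ to delete, yielding $O(2^k\cdot n^{O(1)})$ time with a polynomial feasibility check at the leaves. The only (cosmetic) difference is that the paper branches by deleting one of the two halves of the full outgoing edge set of a maximum-out-degree vertex, whereas you branch on deleting one of two individual outgoing edges; both variants decrease $k$ by at least $1$ per branch and are exhaustive for the same reason, namely that a feasible solution retains exactly one outgoing edge per non-sink vertex.
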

\begin{proof}
The parameter $k$ is the number of edges to be deleted. Given any instance $\GG$ of problem , every feasible solution graph $\GG_{\TT}$ is a forest with trees  with set of roots as set of all sink nodes $\TT$. Clearly then $k=|\EE|-|\VV|+|\TT|$. Let us denote the problem instance by $(\GG,\lambda,k)$.\\
\begin{observation}\label{2deg}
If $k>0$ and only the sink nodes have outdegree 0, then there is a non-sink node with outdegree atleast $2$. 
\end{observation}
\begin{proof}
Sum of outdegree of all the non-sink nodes is greater than $|\VV|-|\TT|$. Hence the observation follows from pigeon hole principle.
\end{proof}
\textbf{Branching B.1.} Let $k>0$. Consider the vertex with maximum outdegree. If $l$ is the outdegree of one such vertex $v$, delete one of the two groups of edges $\{1,\ldots,\lfloor l/2 \rfloor\}$ and $\{\lfloor l/2 \rfloor+1, \ldots , l\}$ outgoing from $v$. Then solve the problem recursively for two new subproblems with new parameter $k^{'}\leq k-1$.

Now we describe why the Branching B.1 is safe. Note that the Branching B.1 is triggered only when $k>0$. It follows from Observation \ref{2deg} that outdegree of $v$ is at least 2. Consider the degree of $v$ to be $l$ and the corresponding outgoing edges from $v$ to be $\{1,\ldots,l\}$ . Since $v$ can delegate only to exactly one of its neighbours connected by $\{1,\ldots,l\}$, other $l-1$ edges need to be deleted from delegation graph as they can not be a part of feasible solution. If we partition the set of edges into two disjoint sets $\{1,\ldots,\lfloor l/2 \rfloor\}$ and $\{\lfloor l/2 \rfloor+1, \ldots , l\}$ , only one out of the two groups can be a part of feasible solution. This allows us to delete the other half set say $\{\lfloor l/2 \rfloor+1, \ldots , l\}$. As we know that $|l|\geq 2$ which comes from the fact that outdegree of vertex $v$ is at least 2. The problem now reduces to a smaller instance $\GG^{'}$ with edges  $\EE^{'}[\GG^\pr]$ $=$ $\EE[\GG] \backslash \{\lfloor l/2 \rfloor+1, \ldots , l\}$ and parameter number of edges to be deleted as $k^{'} \leq k-1$. Thus way we get a bounded search tree with only constant number of subproblems at each branch  such that at each recursive step the height of search tree reduces by at least one.

Given a directed delegation graph $\GG$, the algorithm works as follows. As long as $k>0$, Branching Rule B.1 is applied exhaustively in the bounded search tree.  Note that Branching Rule B.1 brings down the parameter $k$ in every call by at least 1. Whenever the parameter $k$ becomes $0$, we have a feasible solution as the non-sink nodes have the outdegree of $1$. Now we can easily check in polynomial time whether the feasible solution is a YES instance or a NO instance.
At every recursive call we decrease the parameter by at least 1 and thus the height of the tree is at most $k$. Also since every node of bounded search tree splits into two, we get $O(2^{k})$ leaves. Clearly the time taken at every node is bounded by $n^{O(1)}$. Thus if $f(k) = O(2^{k})$ be the number of nodes in the bounded search tree, the total time used by the algorithm is at most $O(2^{k}n^{O(1)})$ which gives us an FPT for \rd.
\end{proof}
\qed

\section{Structural Results}
\begin{theorem}\label{thm:fractional_delegation}
There exists a linear programming formulation for the optimization version of \rd where fractional delegation of votes is allowed. Thus the fractional variant is solvable in polynomial time.
\end{theorem}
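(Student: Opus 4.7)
The plan is to formulate the fractional variant of \rd as a flow-based linear program. For each directed edge $(u,v)\in\EE$ introduce a non-negative variable $f_{uv}$ representing the total voting mass flowing along that edge, and add one more variable $\lambda$ for the maximum weight at any sink. I would impose flow conservation with a unit supply at every non-sink together with a weight-cap at every sink:
\begin{align*}
\sum_{v:(u,v)\in\EE} f_{uv} - \sum_{v:(v,u)\in\EE} f_{vu} &= 1 \quad\forall u\in\VV\setminus\TT,\\
1 + \sum_{v:(v,t)\in\EE} f_{vt} &\leq \lambda \quad\forall t\in\TT,
\end{align*}
along with $f_{uv}\geq 0$ for every edge, and the objective is to minimize $\lambda$.

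Next I would establish the equivalence between feasible LP solutions and fractional delegations. In one direction, given any fractional delegation with split fractions $x_{uv}\in[0,1]$ satisfying $\sum_{v}x_{uv}=1$ at every non-sink $u$, the total induced voting mass passing along edge $(u,v)$ defines a flow $f_{uv}$ that satisfies all the above constraints with $\lambda$ equal to the resulting maximum sink weight. Conversely, given a feasible LP solution, flow conservation forces $\sum_{v:(u,v)\in\EE} f_{uv} = 1 + \sum_{v:(v,u)\in\EE} f_{vu} \geq 1$ at every non-sink $u$, so the ratios $x_{uv}=f_{uv}/\sum_{v'} f_{uv'}$ are well-defined, belong to $[0,1]$, sum to $1$ at every non-sink, and induce sink weights that coincide with $1+\sum_{v:(v,t)\in\EE} f_{vt}$.

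The main subtlety I anticipate is that an optimal LP solution may contain a non-negative circulation around a directed cycle: flow conservation is preserved by adding any such circulation, so the LP minimizer need not be unique and the induced fractional delegation may route some voting mass repeatedly around cycles. However, such circulations contribute nothing to the inflow at any sink, so the value of $\lambda$ is unaffected and the correspondence between LP solutions and fractional delegations with respect to the objective still holds. Finally, since the LP has $|\EE|+1$ variables and $O(|\VV|+|\EE|)$ constraints with coefficients in $\{-1,0,1\}$, it is solvable in polynomial time by the ellipsoid method or an interior-point algorithm, which yields the claimed polynomial-time algorithm for \rfd.
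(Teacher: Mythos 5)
Your proposal is correct and follows essentially the same route as the paper: the same flow-style LP with unit supply at each non-sink, conservation constraints, a sink-weight variable to be minimized, and polynomial-time LP solvability. Your added discussion of the LP-to-delegation correspondence and of circulations around cycles is a justification the paper simply omits, not a different approach.
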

\begin{proof}
We consider the fractional variant of Liquid Democracy Delegation Problem where it is allowed to fractionally delegate votes of a source (delegator) to multiple nodes such that total number of votes being delegated is conserved at the delegator. We formally define conservation while formulating the LP for the problem.\\
LP formulation follows similar to the LP formulation of flow-problems (e.g. Max-FLow-MinCut etc). We assign $x_{u,v}$ as weight to every edge $(u,v)\in\EE[\GG]$ which corresponds to the fractional weight of votes delegated from vertex $u$ to $v$ ( for all $u,v \in \VV[\GG]$. For all other $x_{u,v}$ where $(u,v)$ pair doesn't correspond to an edge of delegation graph we assign value 0.  It immediately follows that for all sink nodes $t\in\TT[\GG]$ , total weight of fractional votes being delegated to each sink-node $t$  (including that of the sink node $t$) is  $\sum\limits_{v\in\VV \backslash  \TT} x_{v,t}+1$                 $\forall t \in \TT$ . For all other non-sink nodes $s \in \VV \backslash \TT$ , node $s$ obeys conservation as follows :
\begin{center}
 $\sum\limits_{u\in\VV \backslash  \TT} x_{u,s}$ + 1 =  $\sum\limits_{v\in\VV} x_{s,v}$           , $\forall s \in \VV \backslash \TT$
\end{center}
Our aim is to minimize the maximum weight of votes delegated to any sink node (including that of the sink node). The corresponding LP formulation is:
\begin{center}
 $\text{minimize }z$\\
$z\geq\sum\limits_{v\in\VV \backslash  \TT} x_{v,t}+1$, $\forall t\in\TT$\\
 $\sum\limits_{u\in\VV \backslash  \TT} x_{u,s}$ + 1 =  $\sum\limits_{v\in\VV} x_{s,v}$           , $\forall s \in \VV \backslash \TT$ [Follows from conservation]\\
$x_{u,v} \geq 0$ ,$\forall (u,v) \in \EE[\GG]	$\\
$x_{u,v} = 0$ ,$\forall (u,v) \notin \EE[\GG]	$
\end{center}
\end{proof}
\qed


\section{Conclusion and Future Direction}

We have studied the parameterized complexity of a fundamental problem in liquid democracy, namely \rd. We considered various natural parameters for the problem including the number of sink vertices, maximum allowed weight of any sink in the final delegation graph, maximum degree of any vertex, tree-width,  the number of edges that one deletes in an optimal solution, number of non-sink vertices. We also show that a related problem which we call \rfd is polynomial time solvable.

An important future work is to resolve the complexity of \rd if the input graph is already acyclic or tree. We know that there exists a $\Omega(\log n)$ lower bound on the approximation factor of optimizing the maximum allowed weight of any sink~\cite{golz2018fluid}. It would be interesting to see if there exsits \FPT algorithms achieving a approximation factor of $o(\log n)$.

\bibliography{references}

\begin{thebibliography}{10}
\providecommand{\url}[1]{\texttt{#1}}
\providecommand{\urlprefix}{URL }
\providecommand{\doi}[1]{https://doi.org/#1}

\bibitem{berman2004approximation}
Berman, P., Karpinski, M., Scott, A.: Approximation hardness of short symmetric
  instances of max-3sat. Tech. rep. (2004)

\bibitem{brill2018pairwise}
Brill, M., Talmon, N.: Pairwise liquid democracy. In: IJCAI. vol.~18, pp.
  137--143 (2018)

\bibitem{caragiannis2019contribution}
Caragiannis, I., Micha, E.: A contribution to the critique of liquid democracy.
  In: IJCAI. pp. 116--122 (2019)

\bibitem{christoff2017binary}
Christoff, Z., Grossi, D.: Binary voting with delegable proxy: An analysis of
  liquid democracy. arXiv preprint arXiv:1707.08741  (2017)

\bibitem{fortune1980directed}
Fortune, S., Hopcroft, J., Wyllie, J.: The directed subgraph homeomorphism
  problem. Theoretical Computer Science  \textbf{10}(2),  111--121 (1980)

\bibitem{golz2018fluid}
G{\"o}lz, P., Kahng, A., Mackenzie, S., Procaccia, A.D.: The fluid mechanics of
  liquid democracy. In: International Conference on Web and Internet Economics.
  pp. 188--202. Springer (2018)

\bibitem{green2015direct}
Green-Armytage, J.: Direct voting and proxy voting. Constitutional Political
  Economy  \textbf{26}(2),  190--220 (2015)

\bibitem{kahng2018liquid}
Kahng, A., Mackenzie, S., Procaccia, A.D.: Liquid democracy: An algorithmic
  perspective. AAAI 2018  (2018)

\bibitem{kling2015voting}
Kling, C.C., Kunegis, J., Hartmann, H., Strohmaier, M., Staab, S.: Voting
  behaviour and power in online democracy: A study of liquidfeedback in
  germany's pirate party. arXiv preprint arXiv:1503.07723  (2015)

\bibitem{szeider2011not}
Szeider, S.: Not so easy problems for tree decomposable graphs. arXiv preprint
  arXiv:1107.1177  (2011)

\end{thebibliography}

\bibliographystyle{splncs04}
\newpage
\end{document}